\newtheorem{theorem}{Theorem}
\newtheorem{lemma}{Lemma}[section]
\newtheorem{proposition}[lemma]{Proposition}
\theoremstyle{definition}
\newtheorem{remark}[lemma]{Remark}
\numberwithin{equation}{section}
\newcommand{\pd}[2]{\frac{\partial {#1}}{\partial {#2}}}
\newcommand{\beq}{\begin{equation}}
\newcommand{\eeq}{\end{equation}}
\newcommand{\be}{\begin{equation*}}
\newcommand{\ee}{\end{equation*}}
\newcommand{\RE}{\mathbb R}
\newcommand{\CO}{\mathbb C}
\newcommand{\FF}{\mathcal F}
\newcommand{\de}{\delta}
\newcommand{\la}{\lambda}
\newcommand{\ome}{\omega}
\renewcommand{\Im}{\operatorname{Im}\,}
\newcommand{\PV}{\operatorname{PV}}
\title[]{Time dependent delta-prime interactions in dimension one}
\author[]{Claudio Cacciapuoti}
\address[]{DiSAT, Sezione di Matematica, Universit\`a dell'Insubria, via Valleggio 11, 22100
Como, Italy}
\email{claudio.cacciapuoti@uninsubria.it}
\author[]{Andrea Mantile}
\address[]{Laboratoire de Math\'{e}matiques, Universit\'{e} de Reims -
FR3399 CNRS, Moulin de la Housse BP 1039, 51687 Reims, France}
\email{andrea.mantile@univ-reims.fr}
\author[]{Andrea Posilicano}
\address[]{DiSAT, Sezione di Matematica, Universit\`a dell'Insubria, via Valleggio 11, 22100
Como, Italy}
\email{andrea.posilicano@uninsubria.it}
\thanks{
The authors acknowledge the support of the FIR 2013 project ``Condensed Matter in Mathematical Physics'', Ministry of University and
Research of Italian Republic  (code RBFR13WAET)}
\begin{document}

%\begin{pspicture}(0,5)(120,110)
%\psline[linewidth=2pt]{<-}(2,1)
%\end{pspicture}

%%%%%%%%%%%%%%%
%ABSTRACT
%%%%%%%%%%%%%%%
\begin{abstract} We solve the Cauchy problem for the Schr\"odinger equation corresponding to the family of  Hamiltonians $H_{\gamma(t)}$ in $L^{2}(\mathbb{R})$ 
which describes a $\delta'$-interaction with time-dependent strength $1/\gamma(t)$. We prove that the strong solution of such a Cauchy problem exits whenever the map $t\mapsto\gamma(t)$ belongs to the fractional Sobolev space $H^{3/4}(\mathbb{R})$, thus weakening the hypotheses which would be required by the known general abstract results. The solution is expressed in terms of the free evolution and the solution of a Volterra integral equation.   
\end{abstract}

%%%%%%%%%%%%%%%%%%%%%%%%%%%%%%%%%%%%%%%%%
%MAKETITLE
%%%%%%%%%%%%%%%%%%%%%%%%%%%%%%%%%%%%%%%%%
\maketitle

\begin{footnotesize}
\emph{Keywords:} time dependent point interactions, delta-prime interaction, non-autonomous Hamiltonians \\
\emph{MSC 2010:} 35Q41, 81Q80, 45D05
% 35Q41  Time-dependent Schrödinger equations, Dirac equations 
% 81Q80  Special quantum systems, such as solvable systems
% 45D05  	Volterra integral equations

% 35B25  	Singular perturbations 
% 26A33  	Fractional derivatives and integrals
\end{footnotesize}

%%%%%%%%%%%%%%%%%%%%%%%%%%%%%%%%%%%%%%%%%
%SECTION
%%%%%%%%%%%%%%%%%%%%%%%%%%%%%%%%%%%%%%%%%
\section{Introduction}

In this paper we address the generation problem for the family of time-dependent  
Hamiltonians $H_{\gamma(t)}$, where $H_{\gamma(t)}$, for any fixed real $t$, denotes the self-adjoint operator in $L^{2}(\mathbb{R})$ describing a $\delta'$-interaction of strength $1/\gamma(t)$ (see \cite{seba:1986,gesztesy:1987ab}, \cite[Chapter I.4]{Albeverio:2005vf} and references therein). 

%RELATED WORKS%%%%%%%%%%%%%%%%
Most of the literature on time dependent point interactions focuses on perturbations of the free dynamics of  the form of a Dirac's delta time dependent potential. In  dimension three, time dependent $\delta$-interactions were studied  in \cite{Sayapova:1983wu,Yafaev:1984ux} and in  \cite{Correggi:2005ev} in  relation with the problem of ionization under periodic perturbations, see also \cite{Correggi:2005gu}. In dimension two, very recently, the problem of the well-posedness was studied in \cite{carlone:2016}. In dimension one, this kind of non-autonomous Hamiltonians were analyzed in \cite{Hmidi:2009kg}, see also \cite{neidhardt2009linear}. 

It is well known that in dimension one the family of point perturbations of the Laplacian is richer than in dimension two and three, and includes $\delta$ and $\delta'$ perturbations, as well as their combinations. In this paper we focus   attention  on the topical case of a time dependent  $\delta'$-interaction. 

We remark that time-dependent $\delta$-interactions have a non-linear counterpart, see, e.g., \cite{Adami:2003hm,Adami:2004bp,Adami:2013bt} in dimension three, and \cite{Adami:1999dm,Adami:2001bt}. More recently a systematic study of the blow-up in the one dimensional case was started in \cite{holmer:2015}.   In dimension one, in particular, such models find applications to the propagation of optical waves in Kerr media, or one-dimensional many body systems, see, e.g., \cite{Dror:2011kr,Malomed:1993bg,hennig:1994,molina:2002} and references therein. The problem of the derivation of non-linear $\delta$-interactions from scaled regular dynamics was recently studied in \cite{Cacciapuoti:2014gt} in dimension one, and  \cite{cacciapuoti:2015ab} in dimension three, see also \cite{cacciapuoti:2015}. 

Several results discussed in the present paper set the ground for  the definition of  non-linear point interactions of $\delta'$-type and for the study of the problem of  their derivation from   scaled regular dynamics. 
%END RELATED WORKS%%%%%%%%%%%%%

We recall that  the definition of $H_\gamma$ is given by the theory of self-adjoint extensions of the symmetric operator $$H^{\circ}=-\Delta\equiv -\frac{d^{2}%
}{dx^{2}}\,,\qquad D(H^{\circ})=C_{0}^{\infty}(\mathbb{R}\backslash\{  0\})\,,$$
and, for any real $\gamma$, reads as follows:
\begin{equation}\label{defH}
H_{\gamma}\psi(x)=-\frac{d^{2}\psi}{dx^{2}}(x)\,,\qquad x\not=0\,,
\end{equation}
\begin{equation}\label{defD}
D(H_{\gamma})=\{  \psi\in L^{2}(\mathbb{R}):\psi=\phi+q\eta,\ \phi\in X^2,\ q\in\mathbb{C},\ \phi'(0)=\gamma q  \}\,,
\end{equation}
where $\eta(x):=\frac{1}2\,\text{\rm sgn}(x)$ and for any $\nu\geq0$ we defined $X^\nu$ as the space of tempered distributions with Fourier transform in $L^2(\RE, |k|^{2\nu}dk)$. 

We remark that  if $f \in X^\nu$, then its Fourier transform might be a distribution as well. Moreover, for $\nu = m +\sigma$, with $m$ integer and $1/2<\sigma\leq1$, if $f\in X^\nu $ then $f \in C^{m}(\RE)$, see Prop. \ref{p:Xnu} below. Hence, $\phi$ in $D(H_{\gamma})$ is  a  $C^1(\RE)$  function and $\phi'(0)$ in  the boundary condition     is well defined. 

The action of the operator $H_\gamma$ can be understood also by exploiting the decomposition $\psi = \phi + q\eta$: this leads to 
\begin{equation}\label{action}
H_\gamma \psi(x) = -\phi''(x), \qquad x\in \RE. 
\end{equation}

When $\gamma(t)$ is assigned as a real valued function of time, the domain
$D(H_{\gamma(t)})$ changes in time with the boundary condition $\phi'(0)=\gamma(t) q$. In contrast, the quadratic form corresponding to $H_{\gamma}$ is given by 
\[
Q_{\gamma}(\psi)=\|\phi'\|^{2}+\gamma |q|^2\,,
\]
\[
D(Q_{\gamma})=\{  \psi\in L^{2}(\mathbb{R}):\psi=\phi+q\eta,\  \phi\in X^1,\ q\in\mathbb{C}\,\},
\]
and so $Q_{\gamma(t)}$ has a  time-independent domain. Thus,  by the abstract results in \cite{K} and \cite{neidhardt2009linear}, assuming that the map $t\mapsto\gamma(t)$ is differentiable, there exists an unitary propagator  $U_{t,s}$ in $L^{2}(\mathbb{R})$, continuously mapping $D(H_{\gamma(s)})$ onto $D(H_{\gamma(t)})$, such that $\psi(t):=U_{t,0}\psi_{0}$ is the (strong) solution of the Cauchy problem 
\begin{equation}\label{cauchy}
\begin{cases}
i\frac{d}{dt}\psi(t)=H_{\gamma(t)}\psi(t)\\
\psi(0)=\psi_{0}\in D(H_{\gamma(0)})\,.%
\end{cases}
\end{equation}
However, as the case of time-dependent self-adjoint extensions $H_{\alpha(t)}$ (corresponding to $\delta$-in\-ter\-ac\-tions) studied in \cite{Hmidi:2009kg} suggests, 
the quite explicit knowledge of the action and operator domain of $H_{\gamma}$ should allow to solve  the Cauchy problem \eqref{cauchy} under weaker regularity conditions on $\gamma(t)$.  Indeed, as we show in this paper, this is the case and problem $\eqref{cauchy}$ has an unique strong solution whenever the map $t\mapsto\gamma(t)$ is in the fractional Sobolev space $H^{3/4}(\mathbb{R})$, a condition weaker than the differentiability hypotheses required in \cite{K} and \cite{neidhardt2009linear}.  Such a $H^{3/4}$ hypothesis is the same required in the paper   \cite{Hmidi:2009kg} in order to guarantee that  the Cauchy problem for the family $H_{\alpha(t)}$ has a strong solution. However, in contrast to \cite{Hmidi:2009kg}, here we make use neither of sophisticated analytic tools (paraproducts) nor of abstract generation theorems (as the ones provided in \cite{K} and \cite{kato:1970ab}); instead, following the same strategy as in the paper \cite{DellAntonio:1996tj}, we apply a more direct approach which exploits the definitions \eqref{defH} and \eqref{defD}, providing a relatively explicit expression for the solution of \eqref{cauchy} with initial datum $\psi_{0}=\phi_{0}+q_{0}\eta$ in $D(H_{\gamma(0)})$:
\begin{equation}\label{psi}
\psi(t) = \phi(t) +  q(t) \eta
\end{equation} 
with 
\begin{equation}\label{phi}
\phi(t)=e^{it\Delta}\phi_{0}-\int_{0}^{t}ds\,\dot q(s)e^{i(t-s)\Delta}\eta\,,
\end{equation}
where $t\mapsto q(t)$ solves the Volterra-type integral equation 
\begin{equation}\label{q}
q(t)=f_{0}(t)-\sqrt{\frac{4i}{\pi}}\int_{0}^{t}ds\,\frac{\gamma(s)q(s)}{\sqrt{t-s}}
\end{equation}
and the source term $f_{0}$ is defined as  
\begin{equation} \label{f0}
 f_0(t):=q_{0}+\sqrt{\frac{4i}{\pi}}\int_{0}^{t}ds\,\frac{(e^{is\Delta}\phi_{0})^{'}(0)}{\sqrt{t-s}}\,.
\end{equation}

%%%%%%%%%%%%%%%%%%%%%%%%%%%%%%%%%%%%%%%%%%%%%%%%%%

We shall prove the following well-posedness result:
\begin{theorem}\label{t:mainth}
Let $T>0$,  $\gamma \in H^{3/4}(0,T)$, and set $\gamma_0=\gamma(0)$.  Let $\psi_0 = \phi_0 + q_0 \eta \in D(H_{\gamma_0})$. Then for any $t\in[0,T]$ there exists a unique strong solution of the Cauchy problem \eqref{cauchy}  given by $\psi(t) = \phi(t) + q(t) \eta$ as in Eqs. \eqref{phi} - \eqref{f0}. 
Moreover the map  $t\mapsto H_{\gamma(t)} \psi(t)$ %$\gamma : \psi_0\xrightarrow{t} \psi(t)$
belongs to $C([0,T],L^2(\RE))$.
\end{theorem}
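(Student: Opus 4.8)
The plan is to reduce the whole Cauchy problem to the scalar Volterra equation \eqref{q} for the charge $q(t)$, and then verify a posteriori that the ansatz \eqref{psi}--\eqref{f0} built from its solution really lands in the operator domain and solves the equation in the strong sense. Let me think about how each piece should go.

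First I would establish that the source term $f_0$ in \eqref{f0} is well defined and has the right regularity. The key object is the boundary data $t\mapsto (e^{is\Delta}\phi_0)'(0)$ of the free evolution. Using that $\phi_0\in X^2\subset C^1$ (via Prop. \ref{p:Xnu}) and writing the free propagator on the Fourier side, $(e^{is\Delta}\phi_0)'(0)$ is essentially a half-derivative-type boundary trace, and the fractional integral $\int_0^t (t-s)^{-1/2}(\cdots)\,ds$ is a Riemann--Liouville operator of order $1/2$. The natural claim is that $f_0\in H^{s}$ for an appropriate $s$, and the $\sqrt{t-s}$ kernel is precisely what one needs so that the half-order smoothing matches the $H^{3/4}$ scale; this is where the exponent $3/4$ will enter.

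Next comes the Volterra equation itself. I would treat \eqref{q} as a fixed-point problem in a suitable space on $[0,T]$. The kernel $\gamma(s)/\sqrt{t-s}$ is weakly singular, so a standard Volterra argument gives a unique continuous solution by iterating the integral operator and using that its iterated kernels gain regularity (the convolution $\int (t-s)^{-1/2}(s-u)^{-1/2}ds$ is bounded). To get existence and uniqueness of $q$ one needs $\gamma\in L^\infty$, which holds since $H^{3/4}(0,T)\hookrightarrow C([0,T])$; a contraction on a short interval plus continuation covers all of $[0,T]$. The more delicate point is to show $q$ is regular enough (in particular that $\dot q$ exists in the right sense) for \eqref{phi} to make sense and for the boundary condition $\phi'(0)=\gamma(t)q$ to hold; this is exactly where the $H^{3/4}$ regularity of $\gamma$, transferred through the Volterra equation, must be shown to propagate to $q$ via the mapping properties of the fractional integral operator.

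Having $q$ and hence $\phi$ from \eqref{phi}, I would then verify the three structural facts: (i) $\phi(t)\in X^2$ for each $t$, so that $\psi(t)\in D(H_{\gamma(t)})$; (ii) the boundary condition $\phi'(0)=\gamma(t)q(t)$, which by the construction of $f_0$ and \eqref{q} should reduce, after applying a half-order derivative to kill the fractional integrals, precisely to the Volterra equation \eqref{q}; and (iii) that $t\mapsto\psi(t)$ is strongly differentiable with $i\dot\psi=H_{\gamma(t)}\psi=-\phi''$ by \eqref{action}. Finally, continuity of $t\mapsto H_{\gamma(t)}\psi(t)=-\phi''(t)$ in $L^2$ follows from the established regularity of $\phi$. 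I expect the main obstacle to be step (i)--(ii): showing that the regularity of $q$ obtained from the Volterra equation is \emph{exactly} enough to place $\phi(t)$ in $X^2$ and to justify taking the boundary trace $\phi'(0)$, since the term $\int_0^t \dot q(s)e^{i(t-s)\Delta}\eta\,ds$ involves differentiating $q$ against a free evolution of the discontinuous function $\eta$, whose regularity is borderline; controlling this singular contribution and matching the $3/4$ Sobolev threshold is the crux of the argument.
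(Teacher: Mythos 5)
Your plan follows the same route as the paper --- solve the Volterra charge equation \eqref{q}, then verify a posteriori that the ansatz \eqref{psi}--\eqref{f0} lies in the operator domain --- and your contraction-plus-continuation argument for existence and uniqueness of $q$ is a legitimate substitute for the paper's appeal to $L^2$ Volterra resolvent theory. But at the two points you yourself identify as the crux, the proposal only names the difficulty; the ideas that resolve it are missing. The regularity of $q$ is obtained in the paper by a bootstrap in which each application of the Abel operator $I$ of \eqref{I} gains exactly half a derivative (Lemma \ref{l:reg}), with products handled by Prop. \ref{p:sobprod}: $q\in L^2\Rightarrow q\in H^{1/4}\Rightarrow q\in H^{3/4}\Rightarrow q\in H^{5/4}(0,T)$. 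The last step is not automatic: to gain the final half derivative one must extend $F=\sqrt{4i}\,\bigl((U(\cdot)\phi_0)'(0)-\gamma q\bigr)\in H^{3/4}(0,T)$ to a function in $X^{3/4}$ on the whole line, and extension by zero preserves $H^\nu$ with $\nu>1/2$ only if the function vanishes at the endpoints (Lemma \ref{l:prolong}-ii). Vanishing at $t=0$ is exactly the compatibility condition $F(0)=\sqrt{4i}\,(\phi_0'(0)-\gamma_0 q_0)=0$, i.e. the hypothesis $\psi_0\in D(H_{\gamma_0})$, and the endpoint $t=T$ is handled by reflecting $F$ about $t=T$ before extending. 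If this is missed, the extended function has a jump, lies in no $H^\nu(\RE)$ with $\nu\geq 1/2$, and the bootstrap stalls strictly below $H^1$; then one cannot conclude $\dot q\in H^{1/4}(0,T)$, and it is precisely $\|\dot q\|_{H^{1/4}(0,T)}$ that controls $\|\hat{\tilde\phi}(t)\|^2_{L^2(\RE,|k|^4dk)}$ in \eqref{tildephi2}, so $\phi(t)\in X^2$ cannot be obtained. Your write-up never indicates where the hypothesis $\psi_0\in D(H_{\gamma_0})$ enters the regularity argument; it must enter here.

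The second gap is that membership in $D(H_{\gamma(t)})$ is not just ``(i) $\phi(t)\in X^2$ plus (ii) the boundary condition'': by \eqref{defD} one also needs $\psi(t)=\phi(t)+q(t)\eta\in L^2(\RE)$, which is genuinely at issue because neither summand is square-integrable ($\eta\notin L^2(\RE)$, and $X^2\not\subset L^2(\RE)$, its elements being tempered distributions with Fourier transform in $L^2(\RE,|k|^4dk)$ only). The paper proves this by integrating by parts to reach the Duhamel form \eqref{house} and then computing $\int_0^t ds\,q(s)\,\partial_t(U(t-s)\eta)(x)$ explicitly, which exhibits a factor $1/x$ and hence square-integrability outside a neighbourhood of the origin; this uses $q\in L^\infty(0,T)$ and $\dot q\in L^1(0,T)$, available from the bootstrap. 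Your proposal omits this verification entirely, and likewise gives no argument for the strong differentiability claimed in your item (iii). The remaining steps you outline --- recovering the boundary condition by inverting the fractional integral (the paper instead applies $I$ and uses $I^2f(t)=\int_0^t ds\, f(s)$), and the $L^2$-continuity of $t\mapsto-\phi''(t)$ --- do match the paper.
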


We briefly discuss the heuristic derivation of the solution. The solution of the Schr\"odinger equation with $H_\gamma$ as  Hamiltonian satisfies the distributional equation 
\begin{equation}
\label{schreq}
i\pd{}{t} \psi(t) = -\psi''(t) + q(t) \delta'_0 ,
\end{equation}
where $\delta_0' $ is the first derivative of the Dirac delta-distribution.  Let us assume, in the first part of this discussion, that the source term  $q(t)$ were an assigned function. Since $\eta'' = \delta_0'$,  it is natural to seek for solutions  of the form \eqref{psi}. Setting  $\psi(t) = \phi(t) +  q(t) \eta$  in Eq. \eqref{schreq} gives the equation for $\phi(t)$ 
\[
i\pd{}{t} \phi(t) = -\phi''(t)  - i\dot q(t) \eta.
\]
Eq. \eqref{phi} follows directly from the  Duhamel's formula. Indeed by integration by parts, see Section \ref{ss:proof} (in particular Eqs. \eqref{house} and \eqref{nota}),  one obtains the following equation for $\psi (t)$
\begin{equation}\label{duapsi}
\psi(t)=e^{it\Delta}\psi_{0}- i\int_{0}^{t}ds\, q(s)e^{i(t-s)\Delta} \delta_0'\,.
\end{equation}
This can be  understood as the Duhamel's formula applied to Eq. \eqref{schreq}. 

The equation for $q(t)$ is obtained by imposing the boundary condition $\phi'(0)=\gamma q $, using Eq. \eqref{phi} to compute the l.h.s. in the boundary condition. We postpone the details of the calculation  to Section \ref{ss:proof}. Here we note that the boundary condition turns the flow associated to Eq. \eqref{schreq} into a unitary flow. In fact  one can show that 
\[\frac{d}{dt} \|\psi(t)\|^2 = 2 \Im \bar q(t) \phi'(0,t).\]
Hence, if the boundary condition is satisfied, one has $\frac{d}{dt} \|\psi(t)\|=0$.  \\ 

We remark that a function $\psi\in D(H_\gamma)$ can be written as the sum of a regular and singular part with both functions in $L^2$ by introducing a regularization parameter $\lambda$. More precisely, define 
\[G_\lambda(x) : = - \frac{e^{-\sqrt\lambda|x|}}{2\sqrt\lambda} \qquad \lambda >0.\]
The function $G_\lambda$ is the solution of the distributional equation $G_\lambda'' = \delta_0 +\lambda G_\lambda$. The  domain $D(H_\gamma)$ can be rewritten as 
\[
D(H_{\gamma})=\Big\{  \psi\in L^{2}(\mathbb{R}):\psi=\phi_\lambda+q G_\lambda',\ \phi_\lambda\in H^2(\RE),\ q\in\mathbb{C},\ \phi_\lambda'(0)=\Big(\gamma +\frac{\sqrt\la}{2}\Big)q \Big\},
\]
and the action of $H_\gamma$ can be understood by the identity 
\[(H_\gamma +\lambda )\psi(x) = -\phi_\lambda''(x) +\lambda\phi_\lambda(x), \qquad x\in\RE,\]
see, e.g., \cite{Albeverio:2005vf}. Eq. \eqref{action} is obtained by taking $\lambda \to 0$. 

We note that the charge equation \eqref{q} does not depend on $\lambda$, it is easy to see that 
\[
f_0(t)=\sqrt{\frac{4i}{\pi}}\int_{0}^{t}ds\,\frac{(e^{is\Delta}\psi_{0})^{'}(0)}{\sqrt{t-s}},
\]
see Eqs. \eqref{Uetaprimo} and \eqref{i} below. The equation for the regular part $\phi_\lambda$, instead,  does  involve the regularization parameter, precisely 
\[\phi_\lambda(t)=e^{it\Delta}\phi_{\lambda,0}-\int_{0}^{t}ds\,(\dot q(s)+\lambda q(s))e^{i(t-s)\Delta}G'_\lambda. 
\]
We note that, even if the regularization   would avoid few issues with  convolutions and Fourier transforms, which must otherwise interpreted in distributional sense,  it makes formulae more involved and introduces an unnecessary parameter. For this reasons we decided to avoid it. \\

The paper consists of one additional  section in which we prove Theorem \ref{t:mainth}.  

\section{Proof of  Theorem \ref{t:mainth}}
\subsection{Notation and preliminaries} In what follows $C$ denotes a generic  positive  constant whose value may change from line to line. 
  
We denote by $\hat \psi$ the spatial Fourier transform of $\psi$ 
\[
\hat \psi(k) =  \int_{\RE} \,dx \, e^{-ikx} \psi(x)\ .
\]
%and by $\check \psi$ its inverse. 

The time-Fourier transform  of $f$ is denoted by $\FF f$ and defined as
\[
\FF f (\ome) = 
\int_{\RE} \,dt\, e^{-i\ome t} f(t)\ .
\]
With these definitions the  Fourier transform of the convolution is \[(\widehat{\psi*\phi})(k) =  \hat \psi(k) \hat \phi(k),\]and similarly for the time-Fourier transform.  
%and the time-Fourier transform of the convolution is  \[\FF(f*g)(\omega) =  \FF f(\ome) \FF g(\ome).\] 

In the following, we denote by $U(t)$ the  free unitary group $e^{i\Delta t}$, we recall that its explicit expression is   given by  
\[
U(t)\psi(x)  = \int_\RE dy\,  \frac{e^{\frac{i(x-y)^2}{4t}}}{\sqrt{4\pi i t}} \psi(y) ,
\]
which in Fourier transform reads
\[\widehat{U(t)\psi}(k) = e^{-i k^2 t } \hat \psi(k).  \] 

\begin{proposition}\label{p:Xnu}
For $\nu = m +\sigma$, with $m$ integer and $1/2<\sigma\leq 1$, it results  $X^\nu \subset C^{m}(\RE)$.
\end{proposition}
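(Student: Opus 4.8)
The plan is to read off the stated embedding from the behaviour of $\hat f$ at high and low frequencies separately, reducing the $C^m$ regularity to an $L^1$ bound on the high-frequency part of $\hat f$ together with the smoothness of its low-frequency part. I would use the elementary fact that if $(ik)^j\hat g\in L^1(\RE)$ for every $0\le j\le m$, then $g\in C^m(\RE)$ with $g^{(j)}=\mathcal F^{-1}[(ik)^j\hat g]$, each a bounded continuous function by the Riemann--Lebesgue lemma. To this end I would fix a smooth cut-off $\chi\in C^\infty(\RE)$ with $\chi(k)=0$ for $|k|\le 1/2$ and $\chi(k)=1$ for $|k|\ge 1$, and split $\hat f=(1-\chi)\hat f+\chi\hat f$. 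Since $\chi$ vanishes in a neighbourhood of the origin, where the only possible distributional singularity of $\hat f$ sits, the product $\chi\hat f$ is an honest function, while $(1-\chi)\hat f$ is a compactly supported tempered distribution.

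For the high-frequency piece $\chi\hat f$ the key step is a single Cauchy--Schwarz estimate. Writing $\nu=m+\sigma$ and $|k|^m=|k|^{\nu}|k|^{-\sigma}$, I would bound, for $0\le j\le m$,
\[
\int_\RE |k|^j\,\chi(k)\,|\hat f(k)|\,dk\le C\int_{|k|\ge 1/2}|k|^m|\hat f(k)|\,dk\le C\Big(\int_\RE|k|^{2\nu}|\hat f(k)|^2\,dk\Big)^{1/2}\Big(\int_{|k|\ge1/2}|k|^{-2\sigma}\,dk\Big)^{1/2}.
\]
The first factor is finite because $f\in X^\nu$, and the second is finite precisely because $2\sigma>1$, i.e. $\sigma>1/2$; this is the exact point where the hypothesis on $\sigma$ enters. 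Hence $(ik)^j\chi\hat f\in L^1(\RE)$ for all $j\le m$, so that $\mathcal F^{-1}(\chi\hat f)\in C^m(\RE)$.

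For the low-frequency piece I would invoke the Paley--Wiener--Schwartz theorem: since $(1-\chi)\hat f$ is a tempered distribution supported in the compact set $\{|k|\le1\}$, its inverse Fourier transform is the restriction of an entire function, in particular it lies in $C^\infty(\RE)\subset C^m(\RE)$. Adding the two contributions then yields $f\in C^m(\RE)$.

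The step I expect to be most delicate is the treatment of the origin. The weight $|k|^{2\nu}$ degenerates at $k=0$, so $f\in X^\nu$ gives no control on $\hat f$ there and, as noted in the paper, $\hat f$ may even fail to be a function near $0$; in particular the naive Cauchy--Schwarz bound breaks down on $\{|k|\le 1/2\}$, where $\int|k|^{-2\sigma}\,dk$ diverges. The frequency splitting above is designed exactly to quarantine this issue: all the analytic work is carried out on $\chi\hat f$, away from the origin, while the possibly singular low-frequency part is disposed of for free by its compact frequency support. One should still check that multiplication by the smooth cut-off is legitimate at the distributional level and that the transition region near $|k|=1/2$ does not spoil the integrability bounds, but these are routine once the splitting is in place.
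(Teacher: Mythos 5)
Your proof is correct, but it follows a genuinely different route from the paper's. Both arguments hinge on the identical high-frequency estimate: writing $|k|^m=|k|^{\nu}|k|^{-\sigma}$ and applying Cauchy--Schwarz, with $\int_{|k|\geq 1/2}|k|^{-2\sigma}\,dk<\infty$ exactly because $\sigma>1/2$. The difference is in how the degenerate weight at $k=0$ is handled. The paper never splits $\hat f$ as a distribution; instead it works with the \emph{difference} $f^{(m)}(x)-f^{(m)}(y)$, whose Fourier representation carries the factor $e^{ikx}-e^{iky}$, bounded by $C|k|^{\sigma}|x-y|^{\sigma}$. This factor tames the singularity of $\hat f$ at the origin, so the low-frequency integral is controlled by $C|x-y|^{\sigma}\|\hat f\|_{L^2(\RE,|k|^{2\nu}dk)}$, and continuity of $f^{(m)}$ then follows from this H\"older-type bound together with dominated convergence on the high-frequency part. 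Your approach instead quarantines the origin with a smooth cutoff $\chi$ and disposes of the compactly supported piece $(1-\chi)\hat f$ via Paley--Wiener--Schwartz, reducing the rest to $L^1$ Fourier inversion. What your version buys: it is more modular, it sidesteps the need to make pointwise sense of the inversion formula when $\hat f$ is only a distribution near $k=0$ (a point the paper treats somewhat informally), and it yields the extra information that $f$ is an entire function plus a $C^m$ function with bounded derivatives. What the paper's version buys: it is more elementary (no Paley--Wiener--Schwartz machinery) and it produces a quantitative $\sigma$-H\"older modulus of continuity for the low-frequency contribution, rather than bare continuity. Your closing caveats (legitimacy of multiplying a tempered distribution by the cutoff, and the transition region $1/2\leq|k|\leq 1$) are indeed routine: multiplication by a compactly supported smooth function preserves tempered distributions, and your Cauchy--Schwarz bound already covers $|k|\geq 1/2$.
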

\begin{proof}
In Fourier transform 
\[f^{(m)}(x) - f^{(m)}(y) = \frac1{2\pi} \int_\RE dk \, (ik)^m(e^{ik x} - e^{ik y}) \hat f(k).\]
We note that 
\begin{equation}\label{y1}\begin{aligned}
\left| \int_{|k|<1} dk \, (ik)^m(e^{ik x} - e^{ik y}) \hat f(k) \right| \leq & C |x-y|^\sigma  \int_{|k|<1} dk  \, |k|^{m+\sigma} |\hat f(k)| \\ 
 \leq & C |x-y|^\sigma  \|\hat f\|_{L^2(\RE,|k|^{2\nu}dk)}  .
\end{aligned}
\end{equation}
Moreover
\begin{equation}\label{y2}\begin{aligned}
\left| \int_{|k|>1} dk \, (ik)^m(e^{ik x} - e^{ik y}) \hat f(k) \right| \leq & C \int_{|k|>1}dk \, |k|^{m} |\hat f(k)| \\ 
  \leq  &  C \left(\int_{|k|>1} \frac{dk}{|k|^{2\sigma}}\right)^{\frac12} \|\hat f\|_{L^2(\RE,|k|^{2\nu}dk)}.
\end{aligned}\end{equation}
Then the continuity of $f^{(m)}$  follows from the bounds \eqref{y1} and \eqref{y2},  and  the dominated convergence theorem.
\end{proof}

We will make use of fractional Sobolev spaces, for this reason we recall few definitions.  For any $-\infty\leq a <b\leq +\infty$ and $\nu\in(0,1)$,  we set   
\[
[f]_{H^\nu(a,b)} :=  \left(\int_{[a,b]^2 } d s d s' \frac{ | f(s) -f(s') |^2  }{ |s-s'|^{1+2\nu} }\right)^{1/2},
\]
which is sometimes  referred to as Gagliardo (semi)norm of $f$. The space $H^{\nu}(a,b)$, for $-\infty\leq a <b\leq +\infty$ and $\nu\in(0,1)$,  is the space of functions for which the norm \[\|f\|_{H^\nu(a,b)} = \|f\|_{L^2(a,b)} + [f]_{H^\nu(a,b)}\] is finite.  To define the space $H^\nu(a,b)$  for   $\nu>1$ not integer, one sets $\nu = m + \sigma$, where $m$ is an integer and $\sigma\in(0,1)$.  Then $H^\nu(a,b)$ is the space of functions such that $f\in H^m(a,b)$ and  $f^{(m)}\in H^\sigma(a,b)$.

\begin{remark}\label{r:XH}Note that,    for $\nu\in(0,1)$ there exists a constant $C_\nu$ such that 
\[[f]_{H^\nu(\RE)} = C_\nu \|\FF f\|_{L^2(\RE,|\omega|^{2\nu}d \omega)},\]
for any $f\in X^\nu$,  this is a direct consequence of Plancherel's theorem  (see \cite{bahouri}, Proposition 1.37). This identity, together with Prop. \ref{p:Xnu} implies that, for all $\nu>1/2$, and $a$ and $b$ finite, if  $f\in X^\nu $ then  $f\in H^\nu(a,b)$, and, consequently, it belongs to  $H^\mu(a,b)$ for all $0\leq \mu \leq \nu$. 
 Also, if $f\in L^2(a,b)$ and $f\in X^\nu$, then  $f\in H^\nu(a,b)$, and, consequently, in $H^\mu(a,b)$ for all $0\leq \mu \leq \nu$. 
\end{remark}

We recall that, for $-\infty\leq a <b\leq +\infty$,  the space  $L^2(a,b)$ can be identified with $H^0(a,b)$, and $L^2(\RE)$ can be identified with $X^0$. 

For the norms, we shall  use the notation $\|\cdot \| = \|\cdot\|_{L^2(\RE)}$. 

We denote by $I$ the operator 
\begin{equation}\label{I}
I f(t) = \frac{1}{\sqrt\pi} \int_0^t\, ds\,  \frac{f(s)}{\sqrt{t-s}}.  
\end{equation}

We shall use the following results which establishes the regularization properties of the operator $I$. 
%LEMMA
\begin{lemma}\label{l:reg} Let  $\nu \geq0$ and $T>0$. Assume that   $f \in X^\nu$   and has  support in $[0,T]$, then $If\in X^{\nu+1/2}$.
\end{lemma}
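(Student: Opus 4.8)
The plan is to recognize $I$ as a Fourier multiplier of order $-1/2$ acting on the time variable, and then to read off the gain of half a derivative directly on the Fourier side. First I would observe that, since $f$ is supported in $[0,T]$, the definition \eqref{I} coincides with the convolution $If = k*f$, where $k$ is the locally integrable, polynomially bounded kernel
\[
k(t) = \frac{1}{\sqrt\pi}\,\frac{1}{\sqrt t}\,\mathbf 1_{(0,\infty)}(t).
\]
Because $f\in X^\nu$ with $\nu\ge0$ has compact support, its time-Fourier transform $\FF f$ is a bounded continuous function on $\RE$ (the restriction of an entire function), and moreover $f\in L^2(\RE)$: indeed $\int_{|\omega|<1}|\FF f|^2<\infty$ by continuity, while $\int_{|\omega|>1}|\FF f|^2\le\int_{|\omega|>1}|\omega|^{2\nu}|\FF f|^2<\infty$. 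This guarantees that $k*f$ is a well-defined tempered distribution.

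The key computation is the Fourier transform of $k$. Since the defining integral $\int_0^\infty t^{-1/2}e^{-i\omega t}\,dt$ is only conditionally convergent, I would regularize it by $\int_0^\infty t^{-1/2}e^{-(\varepsilon + i\omega)t}\,dt = \Gamma(1/2)\,(\varepsilon + i\omega)^{-1/2}$ and let $\varepsilon\downarrow0$, obtaining
\[
\FF k(\omega) = |\omega|^{-1/2}\,e^{-i\frac\pi4\,\sgn(\omega)},
\qquad\text{so that}\qquad
|\FF k(\omega)| = |\omega|^{-1/2}.
\]
The phase $e^{-i\pi\,\sgn(\omega)/4}$ is precisely the $e^{i\pi/4}$ hidden in the constant $\sqrt{4i/\pi}$ of \eqref{q}. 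Since $\FF f$ is continuous and bounded near the origin, the product $\FF k\cdot\FF f$ is locally integrable, and the convolution theorem yields $\FF(If)=\FF k\cdot\FF f$ as an identity of functions.

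It then remains only to evaluate the weighted norm. Using $|\FF k(\omega)|^2=|\omega|^{-1}$,
\[
\int_\RE |\omega|^{2(\nu+1/2)}\,|\FF(If)(\omega)|^2\,d\omega
= \int_\RE |\omega|^{2\nu+1}\,|\omega|^{-1}\,|\FF f(\omega)|^2\,d\omega
= \int_\RE |\omega|^{2\nu}\,|\FF f(\omega)|^2\,d\omega ,
\]
which is finite exactly because $f\in X^\nu$; hence $If\in X^{\nu+1/2}$. In fact this shows $I$ acts as an isometry from $X^\nu$ into $X^{\nu+1/2}$ on this class. Note that $If$ need not lie in $L^2(\RE)$ (it decays only like $t^{-1/2}$), which is consistent with the fact that $X^{\nu+1/2}$ does not embed in $L^2$, the low frequencies being uncontrolled.

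The main obstacle I anticipate is purely distributional bookkeeping: because $k\notin L^1(\RE)$, both the evaluation of $\FF k$ and the validity of $\FF(k*f)=\FF k\,\FF f$ must be justified in the tempered-distribution framework rather than by Fubini. The compact support of $f$ (making $\FF f$ smooth and bounded, in particular near $\omega=0$ where $\FF k$ is singular) together with the $\varepsilon$-regularization of the kernel are exactly what render these two steps rigorous; once $|\FF k(\omega)|=|\omega|^{-1/2}$ is in hand, the gain of half a derivative and the norm identity are immediate.
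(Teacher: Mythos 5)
Your proof is correct and takes essentially the same route as the paper's: both identify $I$ with convolution against the kernel $\pi^{-1/2}\Theta(t)t^{-1/2}$, pass to the time-Fourier side via the convolution theorem for tempered distributions (legitimate because $f$ has compact support), and read off the gain of half a derivative from $|\FF k(\omega)|=|\omega|^{-1/2}$. Your explicit phase $e^{-i\frac{\pi}{4}\sgn(\omega)}$ is the standard, correct one (the constants in the paper's displayed formula for the kernel's Fourier transform appear garbled, but only the modulus matters in either argument), and your preliminary check that $f\in L^2(\RE)$ is a useful justification that the paper delegates to its citation of Duistermaat--Kolk.
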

\begin{proof}
%We shall prove the statement only for $f \in X^\nu$. The proof for  $f \in H^\nu (\RE)$ is similar and can be found in  \cite{Adami:2001bt}.
%
The integral kernel 
\[A(t) = \frac1{\sqrt\pi} \frac{ \Theta(t)}{\sqrt t}, \]where $\Theta$ is the Heaviside function, is a tempered distribution and 
\[\FF A(\omega) = \frac{1}{\sqrt{|\omega|}} \left(\frac{\sqrt i}{2} \Theta(\omega) + \frac{1}{\sqrt 2}\left(\Theta (-\omega)+i\Theta (\omega)\right)\right).\]
Let  $f\in X^\nu$. The convolution of $A$ and $f$, $If = A* f$,  is a tempered distributions and $\FF If= \FF A\FF f$, see, e.g., \cite[Th. 14.25]{duistermaat}. Then 
\[\||\cdot|^{\nu+1/2}\FF If\| \leq C \||\cdot|^{\nu}\FF f\|.\]
\end{proof}
We recall the following   technical lemma. 
%LEMMA
 \begin{lemma}\label{l:prolong}
 Let $-\infty<a<b<\infty$ and let $f\in H^\nu(a,b)$ with $\nu\geq0$. Define 
 \[\tilde f (s) = \left\{ \begin{aligned} &f(s) \quad && \text{if} \quad s \in [a, b] \\ 
 & 0 && \text{otherwise}  \end{aligned}\right. \]
 \begin{enumerate}[i)]
\item If  $0\leq\nu <1/2$, then $\tilde f\in H^\nu(\RE)$. 
\item  If  $1/2 <\nu <3/2$ and $f(a)=f(b)=0$, then $\tilde f\in H^\nu(\RE)$ .
 \end{enumerate}
% In both cases there exists a constant $C$ such that $\|\tilde f \|_{H^{\nu}(\RE)}\leq C\|f\|_{H^\nu(a,b)}\ .$
\end{lemma}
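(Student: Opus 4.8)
The plan is to reduce everything to estimating the Gagliardo seminorm $[\tilde f]_{H^\nu(\RE)}$ of the zero extension, since the $L^2(\RE)$ norm of $\tilde f$ equals $\|f\|_{L^2(a,b)}$ by construction and poses no difficulty. For $\nu\in(0,1)$ I would split the defining double integral over $\RE^2$ into the four regions determined by whether each variable lies in $[a,b]$ or in its complement. Because $\tilde f$ vanishes outside $[a,b]$, the contribution from both variables outside $[a,b]$ is zero, the contribution from both variables inside reproduces exactly $[f]_{H^\nu(a,b)}^2$, and the two mixed regions are equal by symmetry and reduce, after carrying out the integral in the exterior variable, to the single weighted quantity
\[
\int_a^b |f(s)|^2\Big[(s-a)^{-2\nu}+(b-s)^{-2\nu}\Big]\,ds,
\]
up to the harmless constant $\tfrac1{2\nu}$. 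Thus the whole problem collapses to showing that this weighted integral is finite and controlled by $\|f\|_{H^\nu(a,b)}$.

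For part i), with $0<\nu<1/2$ (the case $\nu=0$ being trivial since $H^0=L^2$ and zero extension is an $L^2$ isometry), the weight $(s-a)^{-2\nu}$ is integrable near $a$ because $2\nu<1$, but mere integrability does not bound the quantity by a norm of $f$; the correct tool is the one-dimensional fractional Hardy inequality, which in the subcritical range $\nu<1/2$ holds with no condition at the endpoints and gives
\[
\int_a^b \frac{|f(s)|^2}{(s-a)^{2\nu}}\,ds \le C\big([f]_{H^\nu(a,b)}^2+\|f\|_{L^2(a,b)}^2\big),
\]
and symmetrically near $b$. I would prove this by the standard averaging argument: for $s$ close to $a$, write $f(s)$ as the average of $f(s)-f(t)$ over $t\in(s,s+(s-a))$ plus the average of $f$ over the same interval, estimate the first piece by Cauchy--Schwarz against the seminorm integrand and the second by the $L^2$ norm.

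Part ii) I would split at the integer. For $1/2<\nu<1$ the same four-region decomposition applies, but now $2\nu>1$ and the endpoint weight is no longer integrable in general; here the hypotheses $f(a)=f(b)=0$ (which make sense because $H^\nu\subset C(\RE)$ for $\nu>1/2$ by Proposition \ref{p:Xnu} and Remark \ref{r:XH}) are exactly what is needed, and the supercritical fractional Hardy inequality, valid for functions vanishing at the endpoint, again yields the required bound. For $\nu=1$ the extension by zero of an $H^1(a,b)$ function with $f(a)=f(b)=0$ is the textbook fact that $\tilde f\in H^1(\RE)$ with $(\tilde f)'=\widetilde{f'}$. For $1<\nu<3/2$, writing $\nu=1+\sigma$ with $0<\sigma<1/2$, I would note that $f(a)=f(b)=0$ already gives $\tilde f\in H^1(\RE)$ with weak derivative $\widetilde{f'}$, and then apply part i) to $f'\in H^\sigma(a,b)$ to conclude $\widetilde{f'}\in H^\sigma(\RE)$, i.e. $\tilde f\in H^{1+\sigma}(\RE)$. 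The main obstacle, and the conceptual heart of the statement, is the fractional Hardy inequality together with the transition across $\nu=1/2$: below $1/2$ no boundary condition is needed, above $1/2$ the endpoint weight ceases to be integrable and only the vanishing trace $f(a)=f(b)=0$ saves the estimate, which is precisely why $\nu=1/2$ is excluded from both parts.
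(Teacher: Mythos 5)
The paper offers no proof of this lemma at all: it simply points to Lions--Magenes (Th.~11.4) and to Strichartz (Th.~III.3.2). Your proposal reconstructs, correctly in its architecture, the standard argument that lies behind those citations. The four-region splitting of the Gagliardo seminorm is exactly right: since $\tilde f$ vanishes off $[a,b]$, one gets
\[
[\tilde f]^2_{H^\nu(\RE)}=[f]^2_{H^\nu(a,b)}+\frac1\nu\int_a^b|f(s)|^2\Bigl((s-a)^{-2\nu}+(b-s)^{-2\nu}\Bigr)ds,
\]
so the lemma is equivalent to a fractional Hardy inequality --- subcritical for part i), supercritical with vanishing traces for part ii) --- and your reduction of the range $1<\nu<3/2$ to the case $\nu=1$ plus part i) applied to $f'$ is clean and correct (it is legitimate precisely because $\nu-1<1/2$, so no boundary condition on $f'$ is needed). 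What the paper's citation buys is brevity; what your route buys is a self-contained argument and a transparent explanation of why $\nu=1/2$ is excluded and why traces must vanish above it.

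Two caveats on the ingredients you only sketch. First, in the averaging proof of the subcritical Hardy inequality, the claim that the averaged term is ``estimated by the $L^2$ norm'' is not correct as stated: after Cauchy--Schwarz and Fubini, the term $\frac{1}{(s-a)^{1+2\nu}}\int_s^{2s-a}|f(t)|^2\,dt$ integrates to $\theta\int_a^b|f(t)|^2(t-a)^{-2\nu}dt$ with $\theta=(2^{2\nu}-1)/(2\nu)$, i.e.\ it reproduces the weighted integral you are trying to bound, not the $L^2$ norm. The argument must be closed by absorption: $\theta<1$ exactly when $\nu<1/2$ (and $\theta=1$ at $\nu=1/2$, which is precisely where the lemma fails), so a splitting $|x+y|^2\le(1+\ve^{-1})|x|^2+(1+\ve)|y|^2$ with $\ve$ small, applied to an a priori finite truncation of the weighted integral, completes the proof; alternatively, average over the longer interval $(s,s+\lambda(s-a))$ and let $\lambda$ be large. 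Second, for $1/2<\nu<1$ the supercritical Hardy inequality you invoke is genuinely deeper than the Sobolev--H\"older embedding: the bound $|f(s)|\le C[f]_{H^\nu}(s-a)^{\nu-1/2}$, which follows from $f(a)=0$ and $H^\nu\subset C^{0,\nu-1/2}$, yields only a logarithmically divergent integral against the weight $(s-a)^{-2\nu}$, so this step cannot be shortcut. Citing it (Grisvard, or the very Lions--Magenes theorem the paper quotes) is legitimate and puts you on the same footing as the paper, but it should be flagged as the one non-elementary ingredient of your proof rather than folded in as routine.
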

For a proof  see for example \cite[Th. 11.4]{LM12}, see also \cite[Th. III.3.2]{strichartz:1967}.  % \cite{cacciapuoti:2015ab}. 

We shall also  use the following:
\begin{proposition}\label{p:sobprod}
Let  $\mu > 1/2$ and  $0\leq \nu \leq \mu$. If $g\in H^\mu(a,b)$ and $f\in H^{\nu}(a,b)$ then $fg\in H^{\nu}(a,b)$. 
\end{proposition}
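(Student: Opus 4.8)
The plan is to recast the statement as a boundedness property of the multiplication operator $M_g\colon f\mapsto gf$ on the scale $H^\nu(a,b)$, $0\le\nu\le\mu$, and to reach the interior exponents $0<\nu<\mu$ by interpolating between the two endpoints $\nu=0$ and $\nu=\mu$, where the estimate is elementary. The one structural fact that drives everything is that, since $\mu>1/2$ and $(a,b)$ is bounded, the Sobolev embedding $H^\mu(a,b)\hookrightarrow C^0([a,b])$ holds (the bounded-interval analogue of Proposition \ref{p:Xnu}), so $g\in L^\infty(a,b)$ with $\|g\|_{L^\infty}\le C\|g\|_{H^\mu(a,b)}$. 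This already settles the endpoint $\nu=0$: $\|gf\|_{L^2}\le\|g\|_{L^\infty}\|f\|_{L^2}$, i.e. $M_g$ is bounded on $H^0(a,b)=L^2(a,b)$.

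Next I would establish the other endpoint $\nu=\mu$, namely the tame product bound $\|gf\|_{H^\mu}\le C\|g\|_{H^\mu}\|f\|_{H^\mu}$. Here both factors lie in $L^\infty$, and this is exactly what lets the Gagliardo splitting succeed. For $1/2<\mu\le1$ one writes
\[
g(s)f(s)-g(s')f(s')=g(s)\bigl(f(s)-f(s')\bigr)+f(s')\bigl(g(s)-g(s')\bigr),
\]
and, using $|x+y|^2\le 2|x|^2+2|y|^2$, bounds $[gf]_{H^\mu}^2$ by $2\|g\|_{L^\infty}^2[f]_{H^\mu}^2+2\|f\|_{L^\infty}^2[g]_{H^\mu}^2$; together with the $L^2$ bound this closes the estimate. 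For $\mu=m+\sigma>1$ one first applies the Leibniz rule to $(gf)^{(m)}$ and then runs the same splitting on each top-order product $g^{(j)}f^{(m-j)}$, using that the lower derivatives of $g$ are all in $L^\infty$ by Proposition \ref{p:Xnu}.

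Finally I would interpolate. Since multiplication by the fixed function $g$ is bounded both on $H^0(a,b)$ and on $H^\mu(a,b)$, and since $H^\nu(a,b)=[H^0(a,b),H^\mu(a,b)]_{\nu/\mu}$ is the standard interpolation space (see \cite{LM12}), boundedness of $M_g$ on both endpoints forces boundedness on $H^\nu(a,b)$ for every $0\le\nu\le\mu$; that is, $gf\in H^\nu(a,b)$.

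The main obstacle — and the reason I route through interpolation rather than estimating $[gf]_{H^\nu}$ head on — is the cross term $\int\!\!\int |f(s)|^2|g(s)-g(s')|^2|s-s'|^{-1-2\nu}\,ds\,ds'$ produced by the same splitting at a generic interior exponent. When $\nu\le 1/2$ the factor $f$ need not be bounded, so it cannot simply be pulled out; and when $\mu-1/2<\nu$ the Hölder continuity of $g$ furnished by $g\in H^\mu$ has exponent below $\nu$, so the inner $s'$-integral of $|g(s)-g(s')|^2|s-s'|^{-1-2\nu}$ need not even converge. Both endpoints dodge this clash (at $\nu=0$ there is no seminorm, at $\nu=\mu$ both factors are bounded), which is precisely why interpolation is the economical route; the only point requiring the usual care is at half-integer values of $\nu$, where the interpolation identity for $H^\nu(a,b)$ must be invoked in its precise form.
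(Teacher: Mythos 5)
Your proof is correct in its essentials, but it takes a genuinely different route from the paper: the paper gives no argument at all for this proposition, referring instead to Strichartz's multiplier theorem in \cite{strichartz:1967}. Your route --- proving boundedness of the multiplication operator $M_g$ at the two endpoints ($\nu=0$ via the embedding $H^\mu(a,b)\hookrightarrow C^0([a,b])\subset L^\infty(a,b)$ for $\mu>1/2$, and $\nu=\mu$ via the Leibniz rule and the Gagliardo splitting) and then invoking the interpolation identity $[L^2(a,b),H^\mu(a,b)]_{\nu/\mu}=H^\nu(a,b)$ --- is self-contained and uses only tools the paper already has on hand: \cite{LM12} for interpolation, and the bounded-interval analogue of Prop.~\ref{p:Xnu} for the embedding. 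What the citation buys the authors is brevity and generality (Strichartz treats multipliers on fractional Sobolev spaces in all dimensions and exponents); what your argument buys is transparency, and it fully covers the paper's actual use of the proposition, which only ever invokes it with $\mu=3/4$ and $\nu\in\{0,1/4,3/4\}$, where your endpoint estimates are clean.

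Two corrections are needed, neither fatal. First, at $\mu=1$ the Gagliardo seminorm is the wrong object (it is infinite for non-constant functions); the integer case should be handled directly by $(gf)'=g'f+gf'$ with both $f$ and $g$ bounded, so your range ``$1/2<\mu\le 1$'' should read ``$1/2<\mu<1$''. Second, and more substantively, your parenthetical claim that at the endpoint $\nu=\mu$ ``both factors are bounded'' fails when $\mu=m+\sigma$ with $m\ge 1$ and $\sigma\le 1/2$: in the top-order products $g f^{(m)}$ and $g^{(m)}f$ the $m$-th derivatives lie only in $H^\sigma(a,b)$, which does not embed in $L^\infty$. The cross term you single out as the main obstacle therefore does appear at this endpoint; it is dispatched precisely by the device implicit in your obstacle paragraph: $g$ is H\"older continuous of exponent $\alpha=\min(1,\mu-1/2)$, and $\alpha>\sigma$ whenever $m\ge1$, so $\int_a^b |g(s)-g(s')|^2|s-s'|^{-1-2\sigma}\,ds\le C\int_a^b|s-s'|^{2\alpha-1-2\sigma}\,ds<\infty$, which lets you integrate $|f^{(m)}(s')|^2$ in $L^1$ rather than pulling it out in $L^\infty$. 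So this is a mis-statement rather than a missing idea. Finally, your caution at half-integer $\nu$ is unnecessary: the exceptional Lions--Magenes space $H^{1/2}_{00}$ arises only when interpolating subspaces encoding boundary conditions; for the full spaces $H^s(a,b)$ the interpolation identity holds for every $s$, so the interpolation step goes through uniformly in $\nu\in[0,\mu]$.
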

For the proof we refer to \cite{strichartz:1967}.

\subsection{Well-posedness  of the charge equation}
In this section we study the well-posedness  of the charge equation  \eqref{q}. 

We start with  the following lemma which gives the regularity properties of the inhomogeneous term in Eq. \eqref{q}.  
\begin{lemma}\label{l:phi0}
Let $\phi_0 \in  X^2$, then $(U(\cdot)\phi_0)'(0) \in X^{3/4}$. 
\end{lemma}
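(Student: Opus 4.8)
The plan is to compute the time-Fourier transform of $g(t):=(U(\cdot)\phi_0)'(0)$ explicitly and then estimate its weighted $L^2$ norm directly. First I would note that $U(t)$ acts as multiplication by the unimodular factor $e^{-ik^2t}$ on $\hat\phi_0$, so $U(t)\phi_0\in X^2$ for every $t$; by Proposition \ref{p:Xnu} we have $X^2\subset C^1(\RE)$, hence $g(t)$ is well defined, and in the spatial Fourier representation
\[
g(t)=\frac{1}{2\pi}\int_\RE dk\,(ik)\,e^{-ik^2t}\,\hat\phi_0(k).
\]
Since $g\in X^{3/4}$ means precisely $\FF g\in L^2(\RE,|\omega|^{3/2}d\omega)$, it suffices to identify $\FF g$ and to control this weighted norm.

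The key step is the computation of $\FF g$. Because $(ik)e^{-ik^2t}$ is odd$\times$even in $k$, only the odd part of $\hat\phi_0$ survives, so that
\[
g(t)=\frac{i}{\pi}\int_0^\infty dk\,k\,\hat\phi_0^{\mathrm o}(k)\,e^{-ik^2t},\qquad \hat\phi_0^{\mathrm o}(k):=\tfrac12\big(\hat\phi_0(k)-\hat\phi_0(-k)\big).
\]
Substituting $u=k^2$ turns this into a constant multiple of the inverse time-Fourier transform (up to the reflection $u\mapsto -u$) of the function $u\mapsto \hat\phi_0^{\mathrm o}(\sqrt u)\,\Theta(u)$; reading off the transform gives
\[
\FF g(\omega)=i\,\hat\phi_0^{\mathrm o}(\sqrt{-\omega})\,\Theta(-\omega).
\]
In particular $\FF g$ is supported on $\omega\le 0$, which reflects the free dispersion relation $\omega=-k^2$.

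Finally I would estimate the weighted norm. Using the change of variables $\omega=-k^2$, for which $|\omega|^{3/2}\,d\omega=2k^4\,dk$, one obtains
\[
\int_\RE d\omega\,|\omega|^{3/2}\,|\FF g(\omega)|^2=2\int_0^\infty dk\,k^4\,|\hat\phi_0^{\mathrm o}(k)|^2\le \int_\RE dk\,|k|^4\,|\hat\phi_0(k)|^2=\big\||\cdot|^2\hat\phi_0\big\|^2,
\]
which is finite precisely because $\phi_0\in X^2$. Hence $g\in X^{3/4}$. What makes the estimate close with no loss is that, under $\omega=-k^2$, the low-frequency weight $|\omega|^{3/2}$ is converted into exactly the $X^2$ weight $k^4$.

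The main obstacle is rigor rather than the algebra: since $\hat\phi_0$ is only a weighted-$L^2$ function and may be singular at $k=0$, the oscillatory integral defining $g$ need not converge absolutely, and $g$ is in general only a tempered distribution in $t$. I would therefore justify the identity for $\FF g$ distributionally, pairing $g$ with a Schwartz test function and performing $k\mapsto k^2$ as a measurable change of variables, so that all the displayed equalities hold as identities between elements of $L^2(\RE,|\omega|^{3/2}d\omega)$. The parity reduction is convenient here because it discards the even part of $\hat\phi_0$ before any singular behaviour at the origin can interfere with the argument.
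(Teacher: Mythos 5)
Your proof is correct and follows essentially the same route as the paper: both express $(U(t)\phi_0)'(0)$ through the spatial Fourier transform, change variables $\omega=\pm k^2$ to identify the time-Fourier transform (supported on $\omega\le 0$), and bound the weighted $L^2(\RE,|\omega|^{3/2}d\omega)$ norm by $\||\cdot|^2\hat\phi_0\|$. Your parity (odd-part) reduction is just a repackaging of the paper's splitting of the $k$-integral into $k>0$ and $k<0$, and your closing remark on distributional justification is a minor strengthening of rigor rather than a different argument.
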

\begin{proof}
Since  $\phi_0' \in L^2(\RE)$, one has that the distributional identity  
\[
(U(t)\phi_0)'(x) = \int_\RE dy\,  \frac{e^{\frac{i(x-y)^2}{4t}}}{\sqrt{4\pi i t}} \phi_0'(y)
\]
shows that $(U(t)\phi_0)' \in L^2(\RE)$.  By using the Fourier transform one has that 
\[
(U(t)\phi_0)'(0) = \frac{1}{2\pi} \int_\RE dk\, e^{-ik^2t} \widehat{\phi'_0}(k). 
\]
By splitting  the integral in $dk$ for  $k>0$ and $k<0$, and by using the change of variables $k=\sqrt\ome$ for $k>0$ and $k = -\sqrt \ome$ for $k<0$, it follows that  
\[
(U(t)\phi_0)'(0) = \frac{i}{4\pi} \int_0^\infty \frac{d\ome}{\sqrt \omega}\,  e^{-i\ome t} (\widehat{\phi_0'}(\sqrt\ome) + \widehat{\phi_0'}(-\sqrt\ome)) .
\]
Hence 
\[
\FF\left((U(\cdot)\phi_0)'(0) \right)(\ome)=  \frac{i}{2\sqrt \omega} \Theta(-\ome) (\widehat{\phi_0'}(\sqrt{-\ome}) + \widehat{\phi_0'}(-\sqrt{-\ome})) ,
\]
where $\Theta$ denotes the Heaviside function.  
To prove that $\FF\left((U(\cdot)\phi_0)'(0) \right) \in L^2(\RE,|\ome|^{\frac32}d\ome)$ it is enough to note that 
\[\||\cdot|^{\frac34}\FF\left((U(\cdot)\phi_0)'(0) \right)\| \leq  C \||\cdot|\widehat{\phi'_0}\| =  C \||\cdot|^2\hat{\phi_0}\|,
\]
where we used the change of variables $k^2 = \omega$.  
\end{proof}

We are now ready to prove the main result of this section. 
\begin{lemma}
Let $T> 0$,  $\gamma \in H^{3/4}(0,T)$, and set $\gamma_0=\gamma(0)$.  Let $\psi_0 = \phi_0 + q_0 \eta \in D(H_{\gamma_0})$.  Then Eq.  \eqref{q} admits a unique solution $q\in  H^{5/4}(0,T)$. 
\end{lemma}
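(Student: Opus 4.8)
The plan is to recast \eqref{q} as a Volterra equation for the operator $I$ of \eqref{I}, solve it first in $L^2(0,T)$, and then bootstrap the regularity up to $H^{5/4}$. Writing $\sqrt{4i/\pi}\int_0^t\!\cdots = 2\sqrt i\,I[\,\cdot\,]$, Eq. \eqref{q} becomes $q = f_0 - 2\sqrt i\,I[\gamma q]$ and, by \eqref{f0}, $f_0 = q_0 + 2\sqrt i\,Ig$ with $g:=(U(\cdot)\phi_0)'(0)$. Lemma \ref{l:phi0} gives $g\in X^{3/4}$, hence $g\in H^{3/4}(0,T)$ by Remark \ref{r:XH}; the same remark together with Prop. \ref{p:Xnu} yields $\gamma\in H^{3/4}(0,T)\subset C([0,T])\subset L^\infty(0,T)$.

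For existence and uniqueness I would work in $L^2(0,T)$. Multiplication by $\gamma$ is bounded there, and $I$ is bounded (indeed smoothing) on $L^2(0,T)$; because the kernel $1/\sqrt{t-s}$ is weakly singular and $I$ is causal, the iterates satisfy $\|(2\sqrt i\,I\gamma)^n\|_{L^2(0,T)}\le C^n T^{n/2}/\Gamma(n/2+1)$, which is summable. Thus $1+2\sqrt i\,I\gamma$ is invertible on $L^2(0,T)$ through its Neumann series and, since $f_0\in L^2(0,T)$, there is a unique $q\in L^2(0,T)$ solving \eqref{q}; uniqueness in $L^2(0,T)$ gives a fortiori uniqueness in $H^{5/4}(0,T)\subset L^2(0,T)$.

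The regularity rests on an algebraic cancellation. Evaluating the equation at $t=0$ gives $q(0)=f_0(0)=q_0$, while the boundary condition in \eqref{defD}, namely $\phi_0'(0)=\gamma_0 q_0$, together with $g(0)=\phi_0'(0)$, gives $g(0)=\gamma_0 q_0$. Setting $h:=g-\gamma q$ I obtain $q = q_0 + 2\sqrt i\,Ih$, and whenever $q$ is continuous the cancellation $h(0)=g(0)-\gamma(0)q(0)=0$ holds. I would then iterate the implication $q\in H^\nu(0,T)\Rightarrow q\in H^{\min(\nu,3/4)+1/2}(0,T)$: given $q\in H^\nu$, Prop. \ref{p:sobprod} (with $\mu=3/4$) and $g\in H^{3/4}$ give $h\in H^{\min(\nu,3/4)}(0,T)$; extending $h$ by zero to an element of $X^{\min(\nu,3/4)}$ and applying Lemma \ref{l:reg} gives $Ih\in X^{\min(\nu,3/4)+1/2}$, whence $q=q_0+2\sqrt i\,Ih\in H^{\min(\nu,3/4)+1/2}(0,T)$ by Remark \ref{r:XH}. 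Starting from $\nu=0$ this produces successively $\nu=\tfrac12-\ve$, $\nu=1-\ve$, and finally $\nu=\tfrac54$.

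The delicate point, and the only real obstacle, is the zero-extension required to invoke Lemma \ref{l:reg}: by Lemma \ref{l:prolong} it preserves $H^\nu$ regularity only for $\nu<1/2$, or for $1/2<\nu<3/2$ when the boundary values vanish. In the first two steps $\nu<1/2$ and Lemma \ref{l:prolong}(i) applies directly; in the last step $\nu=3/4$ and I would use Lemma \ref{l:prolong}(ii), the left endpoint being harmless because $q\in H^{1-\ve}\subset C([0,T])$ forces $q(0)=q_0$ and hence $h(0)=0$, and the right endpoint being removed by first extending $h$ past $T$ in $H^{3/4}$ and multiplying by a smooth cutoff equal to $1$ on $[0,T]$ and vanishing at the new right end (this leaves $Ih$ unchanged on $[0,T]$ since $I$ is causal). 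Everything else---the Volterra inversion, the product estimate, and the $X^\nu\!\to\!H^\nu(0,T)$ embeddings---is routine; the whole argument hinges on the cancellation $g(0)=(\gamma q)(0)$, which is precisely what lets the regularity cross the threshold $\nu=1/2$.
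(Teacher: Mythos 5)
Your proposal is correct and follows essentially the same route as the paper's own proof: unique solvability in $L^2(0,T)$ for the weakly singular Volterra equation, followed by a bootstrap through the smoothing operator $I$ applied to the same quantity $g-\gamma q$, using Prop.~\ref{p:sobprod}, Lemma~\ref{l:reg}, Lemma~\ref{l:prolong}, and the same key cancellation $g(0)-\gamma_0 q_0=0$ (from the boundary condition and $q(0)=q_0$) to cross the threshold $\nu=1/2$. The only, equally valid, deviations are cosmetic: you invert the Volterra operator via an explicit Neumann series with $\Gamma$-function decay where the paper cites the $L^2$-kernel resolvent theory of Gripenberg--Londen--Staffans, your intermediate exponents are $1/2-\ve,\ 1-\ve$ instead of the paper's $1/4,\ 3/4$, and in the final step you make the right endpoint vanish by extension plus a smooth cutoff (using causality of $I$) where the paper reflects $F$ about $t=T$.
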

\begin{proof}
We split the proof in two steps: first we prove  that  there exists a unique solution $q\in L^2(0,T)$, then, by  a  bootstrap argument, we  show that such solution belongs to $ H^{5/4}(0,T)$. 

We start by step 1. We use several results from the monograph \cite{gripenberg:1990}. We set 
\[
k(t,s) =%\left\{\begin{aligned}&
 \sqrt{\frac{4 i }{\pi}} \frac{\gamma(s)}{\sqrt{t-s}} 
 %\qquad & t>s 
 %\\ 
%&0 & t<s \end{aligned}\right. 
\]
%and 
%\begin{equation}
%f(t) = q_0 + \sqrt{\frac{4i}{\pi}} \int_0^t \,ds\,  \frac{(U(s)\phi_0)'(0)}{\sqrt{t-s}},
%\end{equation}
and rewrite the equation as 
\begin{equation}\label{q2}
q(t) = f_0(t) - \int_0^t \,ds\, k(t,s) q(s).  
\end{equation}
This is a linear nonconvolution Volterra equation to which we can apply the results in \cite[Ch. 9]{gripenberg:1990}.  We start by noticing that for any finite interval $J\subset\RE^+$,   $k(t,s)$ is a Volterra kernel of type $L^2$, more precisely 
\[
|||k|||_{L^2(J)} := \sup_{\begin{subarray}{l}
        \|h\|_{L^2(J)}\leq1 \\  \|g\|_{L^2(J)}\leq1      \end{subarray}
     } \int_J\int_J \,ds\,dt\, |h(t) k(t,s) g(s)|  \leq C |J|^{1/2}   \|\gamma\|_{L^\infty(J)}.
\]
Hence the interval $[0,T]$ can be divided into finitely many subintervals  $J_i$ such that  $|||k|||_{L^2(J_i)} <1$ on each $J_i$, and, as a consequence of  Cor. 9.3.14 in \cite{gripenberg:1990}, one has that $k$ has a resolvent of type $L^2$ on $[0,T]$. By applying Th. 9.3.6 of \cite{gripenberg:1990}, we conclude that Eq. \eqref{q2} has a unique solution in $L^2(0,T)$. 

We can now proceed to the second step of the proof, which consists in showing that such a solution belongs to $H^{5/4}(0,T)$. By Lemma \ref{l:phi0} and Rem. \ref{r:XH}, one has $(U(\cdot)\phi_0)'(0)\in H^\nu(0,T)$ for all $0\leq \nu \leq 3/4$. We set 
\[
Q(t) = q(t) - q_0 \qquad \text{and} \qquad  F(t) =  \sqrt{4 i } ((U(t)\phi_0)'(0) - \gamma(t) q(t)) \qquad t\in[0,T].
\]
We denote by $\tilde Q$ the function obtained by prolonging $Q$ to zero outside $[0,T]$ and   remark that the claim $\tilde Q\in X^\nu$ implies $Q\in H^\mu(0,T)$ for all $0\leq \mu \leq \nu$, see Rem. \ref{r:XH}, therefore  $q\in H^\mu(0,T)$. 

By the charge  equation \eqref{q},  the identity $Q = I F$ holds true for a.a. $t \in [0,T]$, here $I$ is the operator defined in \eqref{I}. Since, by Prop. \ref{p:sobprod}, $F\in L^2(0,T)$ we  can define $\tilde F \in L^2(\RE)$ by extending it to zero. Then, by Lemma \ref{l:reg}, $\tilde Q = I \tilde F \in X^{1/2}$, hence, $Q\in H^{1/4}(0,T)$ and $q\in H^{1/4}(0,T)$.

We can repeat the argument. We start  with the observation that now we know that $F\in H^{1/4}(0,T)$ and conclude that $q\in H^{3/4}(0,T)$. Here we use Lemma \ref{l:prolong}-$i)$ to claim that $\tilde F\in H^{1/4}(\RE)$ which in turn implies $\tilde F\in X^{1/4}$.

To conclude the proof we must slightly adjust the argument above. So far we have proved that $F \in H^{3/4}(0,T)$, moreover we know that $F(0) = 0$, because the boundary condition $\phi_0'(0) = \gamma_0q_0$ holds true by assumption. Define  $F^s : [0,2T]\to \CO$ by reflection of $F$ about  $t=T$. We have that $F^s(0) = F^s(2T)=0$. We define $\tilde F^s :\RE\to\CO$ by extending $F^s$ to zero and use   Lemma \ref{l:prolong}-$ii)$ to claim that $\tilde F^s\in  H^{3/4}(\RE)$, and, consequently, $\tilde F\in X^{3/4}$.  Applying again Lemma \ref{l:reg}  we conclude that $q\in  H^{5/4}(0,T)$. 
\end{proof}

%SUBSECTION
\subsection{Proof of Theorem \ref{t:mainth}} \label{ss:proof}
The function $\phi(t)$ defined by Eq. \eqref{phi}  exists and is unique for all $t\in [0,T]$. Next we prove that $\phi(t) \in X^2$.  Let us rewrite Eq. \eqref{phi} as 
\[
\phi(t) = U(t)\phi_0 + \tilde \phi(t) ,
\]
where we set 
\begin{equation}\label{tildephi1}
\tilde \phi(t) = - \int_0^t ds \, \dot q(s) U(t-s)\eta .
\end{equation}
One has that $U(t)\phi_0 \in X^2$, because  $\|\widehat{U(t)\phi_0}\|_{L^2(\RE,|k|^4dk)} =\|\hat \phi_0\|_{L^2(\RE,|k|^4dk)} $.  

We are left to prove that $\tilde \phi \in X^2$. We recall that the Fourier transform of $\eta$ is the distribution  $-i \PV \frac1{k}$ (where $\PV$ stands for principal value).   We have that 
\begin{equation}\begin{aligned}\label{tildephi2}
\|\hat{\tilde \phi}(t)\|^2_{L^2(\RE, |k^4|dk)} = &  \frac{1}{2\pi}\int_\RE dk \, k^2 \left| \int_0^t ds\,  e^{-ik^2 (t-s)}  \dot q(s)\right|^2  \\ 
= &  \frac{1}{2\pi}\int_0^\infty d\omega \, \ome^{\frac12} \left| \int_0^t  ds\,  e^{i\ome s}  \dot q(s)\right|^2 \leq C\|\dot q\|_{H^{1/4}(0,T)}.
\end{aligned}\end{equation}
Here, the inequality follows from the same argument used in the proof of Prop. 3.3 in \cite{cacciapuoti:2015ab}. 

Next we prove that $\psi(t) = \phi(t) + q(t)\eta \in L^2(\RE)$. Since $\phi(t)\in C^1(\RE)$, see Prop. \ref{p:Xnu},  and $\eta$ is  bounded, $\psi(t) \in L^2_{loc}(\RE)$. Hence, it is enough to prove that $(1-\chi) \psi(t) \in L^2(\RE)$, where $\chi$ is the characteristic function of the interval $[-1,1]$.   In the definition of $\phi(t)$, see Eq. \eqref{phi}, we use the identity 
\[
\int_0^t ds \, \dot q(s) U(t-s)\eta  = q(t) \eta - q_0 U(t)\eta - \int_0^t ds \,  q(s) \pd{}{s}  U(t-s)\eta, 
\]
which gives 
\begin{equation}\label{house}\psi(t) = U(t) \psi_0  + \int_0^t ds \,  q(s) \pd{}{s}  U(t-s)\eta.
 \end{equation}
Since $U(t)\psi_0\in L^2(\RE)$ we are left to prove that the second term at the r.h.s., times the function $(1-\chi)$,  is in $L^2(\RE)$ as well. 
 We note that 
\begin{equation}\label{Ueta}\begin{aligned}
(U(t)\eta )(x) = & \int_\RE dy \, \frac{e^{i\frac{(x-y)^2}{4t}}}{\sqrt{4\pi i t} } \eta(y) \\ 
  = &\frac12  \frac{1}{\sqrt{4\pi i t} } \left( \int_{-\infty}^x dy \, e^{i\frac{y^2}{4t}} -   \int_x^{\infty} dy \, e^{i\frac{y^2}{4t}} \right).
\end{aligned}
\end{equation}
From which we get 
\[
\pd{}{t}  (U(t)\eta)(x) =  -\frac12 \frac{1}{\sqrt{4\pi i}}  \frac{x}{t^{3/2}} e^{i\frac{x^2}{4t}} = - \sqrt{\frac{i}{\pi}} \frac{\sqrt t}{x} \frac{d}{dt} e^{i\frac{x^2}{4t}} . 
\]
We remark that the first equality can be understood in distributional sense as
\begin{equation}\label{nota}
\pd{}{t}  (U(t)\eta)=  i (U(t)\eta)''=iU(t)\eta'' = i U(t)\delta_0',
\end{equation}
from which one deduces that Eq. \eqref{house} is equivalent to Eq. \eqref{duapsi}. 

Which gives 
\[\begin{aligned} & \int_0^t ds \,  q(s) \pd{}{t}  (U(t-s)\eta)(x) \\ 
 = &  \sqrt{\frac{i}{\pi}} \frac1x  \int_0^t ds \,  q(s)  \sqrt{t-s} \frac{d}{ds} e^{i\frac{x^2}{4(t-s)}}  \\ 
 = & 
 \sqrt{\frac{i}{\pi}} \frac1x  \left( -   q_0  \sqrt{t} e^{i\frac{x^2}{4t}}  -  \int_0^t ds \,  \dot q(s) \,  \sqrt{t-s} \, e^{i\frac{x^2}{4(t-s)}} +  \frac12 \int_0^t ds \,  \frac{q(s)}{\sqrt{t-s}}e^{i\frac{x^2}{4(t-s)}} \right)  . \end{aligned} \]
We gained a factor $1/x$ which gives the bound 
\[  
\left\| (1-\chi) \int_0^t ds \,  q(s) \pd{}{t}  U(t-s)\eta  \right\| \leq C (\|q\|_{L^\infty(0,T)} + \|\dot q\|_{L^1(0,T)}) \leq C \qquad t\in[0,T]. 
\]
Next we prove that the boundary condition $ \phi'(0) = \gamma(t) q$ holds true for all $t\in[0,T]$. From Eq. \eqref{Ueta} we obtain 
\begin{equation} \label{Uetaprimo}
(U(t)\eta )'(0) = \frac{1}{\sqrt{4\pi i t} } . \end{equation}
Hence
\[
\phi'(0,t) = (U(t)\phi_0)'(0)  - \int_0^t ds \, \frac{1}{\sqrt{4\pi i (t-s)} }\, \dot q(s).
\]
We apply the operator $I$, defined in \eqref{I}, and use the charge equation \eqref{q} to  obtain 
\[
(I \phi'(0,\cdot))(t) = (I (U(\cdot)\phi_0)'(0))(t)  -   \frac{1}{\sqrt{4 i } }( q(t) - q_0) = (I\gamma q)(t),
\]
which imply  the boundary condition. Here we used the identities 
\begin{equation}\label{i}
I (\pi (\cdot))^{-1/2}(t) = \frac1{\sqrt{\pi}}\int_0^t ds\, \frac{1}{\sqrt{t-s}}\frac{1}{\sqrt{\pi s}} = 1 \qquad \text{and} \qquad I^2 f(t) = \int_0^t ds\, f(s).
\end{equation}

By Eq. \eqref{action}, to prove the continuity of the map  $t\mapsto H_{\gamma(t)}\psi(t)$ in $L^2(\RE)$ it is enough to show the continuity of $\|\phi''(t)\|$. As the continuity of $U(t)\phi_0 $ is obvious, we just need to show that 
\[\lim_{\de\to 0}\|\hat{\tilde \phi}(t+\de) - \hat{\tilde \phi}(t)\|^2_{L^2(\RE, |k^4|dk)} =0 . \] 
By Eqs. \eqref{tildephi1} and \eqref{tildephi2}, this is reduced to show that 
\[\lim_{\de\to 0}\int_\RE dk \, k^2 \left| \int_t^{t+\delta} ds\,  e^{-ik^2 s}  \dot q(s)\right|^2 = 0. \] 
For the proof of this statement we refer to the proof of Prop. 3.3 in \cite{cacciapuoti:2015ab}.  \hfill $\Box$

%
%\bibliographystyle{/Users/claudio/Dropbox/WIP-dropbox/bibliography/mybibstyle}
%\bibliography{/Users/claudio/Dropbox/WIP-dropbox/bibliography/nonlindelta}

\begin{thebibliography}{10}

\bibitem{Adami:2003hm}
Adami, R., Dell'Antonio, G., Figari, R., and Teta, A., \emph{{The Cauchy
  problem for the Schr{\"o}dinger equation in dimension three with concentrated
  nonlinearity}}, Ann. Inst. H. Poincar\'e Anal. Non Lin\'eaire \textbf{20}
  (2003), no.~3, 477--500.

\bibitem{Adami:2004bp}
Adami, R., Dell'Antonio, G., Figari, R., and Teta, A., \emph{{Blow-up solutions
  for the Schr{\"o}dinger equation in dimension three with a concentrated
  nonlinearity}}, Ann. Inst. H. Poincar\'e Anal. Non Lin\'eaire \textbf{21}
  (2004), no.~1, 121--137.

\bibitem{Adami:2013bt}
Adami, R., Noja, D., and Ortoleva, C., \emph{{Orbital and asymptotic stability
  for standing waves of a nonlinear Schr{\"o}dinger equation with concentrated
  nonlinearity in dimension three}}, J. Math. Phys. \textbf{54} (2013), no.~1,
  013501.

\bibitem{Adami:1999dm}
Adami, R. and Teta, A., \emph{{A Simple Model of Concentrated Nonlinearity}},
  Mathematical Results in Quantum Mechanics (Dittrich, J., Exner, P., and
  Tater, M., eds.), Operator Theory Advances and Applications, Birkh{\"a}user
  Basel, January 1999, pp.~183--189.

\bibitem{Adami:2001bt}
Adami, R. and Teta, A., \emph{{A class of nonlinear Schr{\"o}dinger equations
  with concentrated nonlinearity}}, J. Funct. Anal. \textbf{180} (2001), no.~1,
  148--175.

\bibitem{Albeverio:2005vf}
Albeverio, S., Gesztesy, F., Hoegh-Krohn, R., and Holden, H., \emph{{Solvable
  Models in Quantum Mechanics}}, AMS, 2005. With an appendix of P. Exner.

\bibitem{bahouri}
Bahouri, H., Chemin, J.-Y., and Danchin, R., \emph{Fourier analysis and
  nonlinear partial differential equations}, vol. 343, Springer Science \&
  Business Media, 2011.

\bibitem{cacciapuoti:2015}
Cacciapuoti, C., \emph{On the derivation of the {S}chr\"odinger equation with
  point-like nonlinearity}, Nanonsystems: Phys., Chem., Math \textbf{6} (2015),
  79--94.

\bibitem{Cacciapuoti:2014gt}
Cacciapuoti, C., Finco, D., Noja, D., and Teta, A., \emph{{The NLS Equation in
  Dimension One with Spatially Concentrated Nonlinearities: the Pointlike
  Limit}}, Lett. Math. Phys. \textbf{104} (2014), no.~12, 1557--1570.

\bibitem{cacciapuoti:2015ab}
Cacciapuoti, C., Finco, D., Noja, D., and Teta, A., \emph{The point-like limit
  for a {NLS} equation with concentrated nonlinearity in dimension three},
  arXiv:1511.06731 [math-ph] (2015), 35pp.

\bibitem{carlone:2016}
Carlone, R., Correggi, M., and Figari, R., \emph{Two-dimensional time-dependent
  point interactions}, arXiv:1601.02390 [math-ph] (2016), 17pp.

\bibitem{Correggi:2005gu}
Correggi, M. and Dell'Antonio, G., \emph{{Decay of a bound state under a
  time-periodic perturbation: a toy case}}, J. Phys. A \textbf{38} (2005),
  no.~22, 4769--4781.

\bibitem{Correggi:2005ev}
Correggi, M., Dell'Antonio, G., Figari, R., and Mantile, A., \emph{{Ionization
  for Three Dimensional Time-Dependent Point Interactions}}, Comm. Math. Phys.
  \textbf{257} (2005), no.~1, 169--192.

\bibitem{DellAntonio:1996tj}
Dell'Antonio, G.~F., Figari, R., and Teta, A., \emph{{A limit evolution problem
  for time-dependent point interactions}}, J. Funct. Anal. \textbf{142} (1996),
  no.~1, 249--274.

\bibitem{Dror:2011kr}
Dror, N. and Malomed, B.~A., \emph{{Solitons supported by localized
  nonlinearities in periodic media}}, Phys. Rev. A \textbf{83} (2011), no.~3,
  033828.

\bibitem{duistermaat}
Duistermaat, J.~J. and Kolk, J.~A., \emph{Distributions: {T}heory and
  {A}pplications}, Springer, 2010.

\bibitem{gesztesy:1987ab}
Gesztesy, F. and Holden, H., \emph{A new class of solvable models in quantum
  mechanics describing point interactions on the line}, J. Phys. A \textbf{20}
  (1987), no.~15, 5157--5177.

\bibitem{gripenberg:1990}
Gripenberg, G., Londen, S.~O., and Staffans, O., \emph{Volterra integral and
  functional equations}, vol.~34, Cambridge University Press, 1990.

\bibitem{hennig:1994}
Hennig, D., Tsironis, G., Molina, M., and Gabriel, H., \emph{A nonlinear
  quasiperiodic {K}ronig-{P}enney model}, Phys. Lett. A \textbf{190} (1994),
  no.~3-4, 259--263.

\bibitem{Hmidi:2009kg}
Hmidi, T., Mantile, A., and Nier, F., \emph{{Time-dependent Delta-interactions
  for 1D Schr{\"o}dinger Hamiltonians}}, Math. Phys. Anal. Geom. \textbf{13}
  (2009), no.~1, 83--103.

\bibitem{holmer:2015}
Holmer, J. and Liu, C., \emph{Blow-up for the {1D} nonlinear {S}chr\"odinger
  equation with point nonlinearity {I}: {B}asic theory}, arXiv:1510.03491
  [math.AP] (2015), 22pp.

\bibitem{kato:1970ab}
Kato, T., \emph{Linear evolution equations of ``hyperbolic'' type}, J. Fac.
  Sci. Univ. Tokyo Sect. I \textbf{17} (1970), no.~6, 241--258.

\bibitem{K}
Kisy{\'n}ski, J., \emph{Sur les op{\'e}rateurs de {G}reen des problemes de
  {C}auchy abstraits}, Studia Mathematica \textbf{3} (1964), no.~23, 285--328.

\bibitem{LM12}
Lions, J.~L. and Magenes, E., \emph{Non-homogeneous boundary value problems and
  applications}, vol.~1, Springer Science \& Business Media, 2012.

\bibitem{Malomed:1993bg}
Malomed, B.~A. and Azbel, M.~Y., \emph{{Modulational instability of a wave
  scattered by a nonlinear center}}, Phys. Rev. B \textbf{47} (1993), no.~16,
  10402--10406.

\bibitem{molina:2002}
Molina, M. and Bustamante, C., \emph{The attractive nonlinear delta-function
  potential}, Am. J. Phys. \textbf{70} (2002), 67--70.

\bibitem{neidhardt2009linear}
Neidhardt, H. and Zagrebnov, V.~A., \emph{{Linear non-autonomous Cauchy
  problems and evolution semigroups}}, Adv. Differential Equations \textbf{14}
  (2009), no.~3/4, 289--340.

\bibitem{Sayapova:1983wu}
Sayapova, M.~R. and Yafaev, D.~R., \emph{{The evolution operator for
  time-dependent potentials of zero radius}}, Boundary value problems of
  mathematical physics, vol. 159, Part 12, Work collection, Trudy Mat. Inst.
  Steklov., 1983, pp.~167--174.

\bibitem{seba:1986}
{\v{S}}eba, P., \emph{Some remarks on the $\delta'$-interaction in one
  dimension}, Rep. Math. Phys. \textbf{24} (1986), no.~1, 111--120.

\bibitem{strichartz:1967}
Strichartz, R.~S., \emph{Multipliers on fractional {S}obolev spaces}, J. Math.
  Mech \textbf{16} (1967), no.~9, 1031--1060.

\bibitem{Yafaev:1984ux}
Yafaev, D.~R., \emph{{Scattering theory for time-dependent zero-range
  potentials}}, Annales de l'IHP Physique th\'eorique \textbf{40} (1984),
  no.~4, 343--359.

\end{thebibliography}
%
%

\end{document}